\documentclass[journal]{IEEEtran}

\usepackage[utf8]{inputenc}
\usepackage[T1]{fontenc}
\usepackage{amsmath, amsfonts, amssymb, mathtools}
\usepackage{amsthm}
\usepackage{float}
\usepackage{graphicx}
\usepackage{cite}
\usepackage{xcolor}
\usepackage{booktabs} 
\usepackage{tabularx}
\usepackage{hhline}

\usepackage{braket}
\usepackage{tikz}
\usepackage{quantikz} 

\usepackage[ruled,vlined,linesnumbered]{algorithm2e}
\newcommand{\nonl}{\renewcommand{\nl}{\let\nl\oldnl}}

\usepackage[font=small]{caption}
\usepackage{subcaption}

\usepackage{titlesec}
\usepackage{csquotes}
\usepackage{hyperref}
\hypersetup{
    colorlinks=true,
    linkcolor=black,
    citecolor=magenta,
    urlcolor=blue,
}

\newtheorem{theorem}{Theorem}

\definecolor{codegreen}{rgb}{0,0.6,0}
\definecolor{backcolour}{rgb}{0.95,0.95,0.92}

\graphicspath{{./figures/}{./Images/}} 

\begin{document}

\hyphenation{op-tical net-works semi-conduc-tor}

\title{Unsharp Measurement with Adaptive Gaussian POVMs for Quantum-Inspired Image Processing}
\author{Debashis~Saikia, Bikash~K.~Behera, Mayukha~Pal, Prasanta~K.~Panigrahi

\thanks{Debashis Saikia  is with the Department of Physics, Indian Institute of Science Education and Research, Thiruvananthapuram, India; Email: debashis23@iisertvm.ac.in}

\thanks{B.~K. Behera is with Bikash's Quantum (OPC) Pvt. Ltd., Mohanpur, WB, 741246 India and Università degli Studi di Cagliari, Via Is Mirrions, Cagliari, 09123, Italy; Email: bikas.riki@gmail.com}

\thanks{Mayukha~Pal is with the ABB Ability Innovation Center, Asea Brown Boveri Company, Hyderabad 500084, India.Email: mayukha.pal@in.abb.com}

\thanks{Prasanta~K.~Panigrahi is with Center for Quantum Science and Technology, Siksha O Anusandhan University, Bhubaneswar,
India and Department of Physical Sciences, Indian Institute of Science Education and Research (IISER), Kolkata, Mohanpur 741246, West Bengal, India; Email: director.cqst@soa.ac.in}

}

\maketitle
\begin{abstract}
We propose a data-adaptive probabilistic intensity remapping framework for structure-preserving transformation of grayscale images. The suggested method formulates intensity transformation as a continuous, data-driven remapping process, in contrast to traditional histogram-based techniques that rely on hard thresholding and generate piecewise-constant mappings. The image statistics yield representative intensity values, and Gaussian-based weighting methods probabilistically allocate each pixel to several components. Smooth transitions while preserving structural features are achieved by computing the output intensity as an expectation over these components. A smooth transition from soft probabilistic remapping to hard assignment is made possible by the introduction of a nonlinear sharpening parameter $\gamma$ to regulate the degree of localization. This offers clear control over the trade-off between intensity discrimination and smoothing. Furthermore, the resolution of the remapping function is determined by the number of components $k$. When compared to thresholding-based methods, experimental results on standard benchmark images show that the suggested method achieves better structural fidelity and controlled information reduction as measured by PSNR, SSIM, and entropy. Overall, by allowing continuous, probabilistic intensity modifications, the framework provides a robust and efficient substitute for discrete thresholding.
\end{abstract}
\begin{IEEEkeywords}
Quantum Measurement, POVM, Image Processing
\end{IEEEkeywords}
\IEEEpeerreviewmaketitle

\section{Introduction}\label{Sec1}

Quantum measurement constitutes the fundamental mechanism through which information about a physical system is extracted. In the conventional formulation of quantum mechanics, measurements are described by projection-valued measures (PVMs) \cite{nielsen2010quantum, preskill1998ph229, vonNeumann1927a}, where each outcome is associated with an orthogonal projector arising from the spectral decomposition of an observable. Such measurements correspond to idealized scenarios in which the system is projected onto an eigenstate of the measured observable, yielding sharp outcomes with well-defined eigenvalues. While this framework provides a complete description for ideal measurements, it becomes restrictive in practical situations where measurements are subject to uncertainty, noise, or partial information extraction. In particular, the requirement of orthogonality and exact eigenvalue resolution limits the ability of PVMs to describe more general measurement processes that arise in realistic quantum systems and information-processing tasks.

To overcome these limitations, the formalism of generalized quantum measurements based on positive operator-valued measures (POVMs) \cite{nielsen2010quantum, barnett2009quantum, peres1990neumark} was developed. In this framework, measurement outcomes are described by a set of positive semidefinite operators $\{E_k\}$ that satisfy the completeness condition $\sum_k E_k = I$, without requiring mutual orthogonality. The probability of obtaining outcome $k$ for a system in state $\rho$ is given by $p_k = \mathrm{Tr}(E_k \rho)$, thereby extending the Born rule to a more general operator setting. Unlike PVMs, POVMs allow measurement operators to overlap, enabling the description of measurements that extract information in a probabilistic and non-projective manner. This increased flexibility makes POVMs particularly suitable for modeling measurement processes in open systems, indirect measurements, and scenarios involving limited resolution or coarse-graining of observable quantities.

Unsharp (or weak) measurements \cite{Busch1998, BUSCH199810, PhysRevD.33.2253, PhysRevA.91.032116, wiseman2009quantum} provide a controlled level of imprecision in the measurement processes, providing a natural generalization of projective measurements in this larger context. Such measurements effectively investigate coarse-grained versions of observables, where each outcome represents contributions from a variety of neighboring eigen-states, as opposed to assigning outcomes to specific eigenvalues. A kernel that distributes weight throughout the spectrum and whose width determines the measurement strength is a useful way to characterize this phenomenon. 
In addition to providing a helpful viewpoint where measurements function as probabilistic transformations of observables rather than just extracting outcomes, such a framework smoothly interpolates between sharp and extremely coarse-grained measurements.

In image processing, where intensity value transformations are crucial, this unsharp measurement model becomes particularly relevant. Most conventional methods operate by changing these values through statistical or kernel-based processes. A grayscale image can be thought of as a distribution over intensity levels. From this angle, it is natural to consider whether a measurement-theoretic framework may be used to analyze such transformations. Quantum mechanical tools can be implemented in a strictly operator-theoretic manner by encoding intensity values in a Hilbert space. In particular, measurement can be interpreted as a mechanism that induces transformations in the data itself rather than just as a way to retrieve information. In this context, the modified intensities emerge as expectation values of the relevant outcomes, and the measurement operators are built from the statistical structure of the image. This offers an alternative perspective on image transformation in which operator-based descriptions and statistical models are integrated into the same framework rather than being handled independently.

\subsection{Related Works and Proposed Approach}

Histogram-based methods such as Multi-Otsu \cite{6313341} and recursive statistical approaches \cite{ARORA2008119} are widely used for grayscale image processing tasks including segmentation and intensity transformation. Multi-Otsu extends the classical Otsu method \cite{4310076, kapur1985new} by selecting multiple thresholds that maximize inter-class variance, thereby partitioning the intensity histogram into discrete regions. Similarly, recursive statistical methods iteratively determine thresholds based on histogram statistics.

Despite their effectiveness, these approaches rely fundamentally on hard partitioning of the intensity space. Consequently, the resulting mappings are piecewise constant, often leading to quantization artifacts and loss of fine structural details. Moreover, such methods lack a probabilistic formulation, making it difficult to model smooth transitions or uncertainty in intensity representation.

To address these limitations, probabilistic and continuous formulations of intensity mapping have been explored, where pixel intensities are transformed using weighted combinations derived from image statistics. However, existing approaches typically lack a principled operator-based framework that unifies normalization, adaptivity, and controllable localization within a single formulation.

In this context, quantum measurement theory provides a natural mathematical framework for probabilistic transformations. In particular, generalized measurements described by positive operator-valued measures (POVMs) allow overlapping, non-orthogonal operators that can model soft assignments across intensity levels. Quantum image processing models such as FRQI \cite{frqi} and NEQR \cite{neqr}, reviewed in \cite{wang2022quantum}, further demonstrate how image information can be embedded into operator-based representations.

Recent work by Barui \textit{et al.} \cite{barui2024novel} utilized Gaussian-based POVMs for threshold selection, where measurement outcomes were used to define segmentation boundaries. While this approach introduces a probabilistic perspective, the role of the POVM remains limited to determining discrete thresholds.

In contrast, the present work formulates probabilistic intensity remapping itself as a quantum measurement process. Specifically, we construct data-adaptive POVM operators from Gaussian models of the intensity histogram, and define the transformed intensity as the expectation value of measurement outcomes. This results in a continuous, structure-preserving mapping in which each pixel contributes probabilistically to multiple intensity components. Furthermore, a nonlinear sharpening parameter enables controlled transition between unsharp (probabilistic) and sharp (projective) measurement regimes, providing explicit control over the trade-off between smoothing and localization. This formulation retains the full operator structure of the POVM and directly utilizes it for intensity transformation rather than threshold selection.

\subsection{Research Gap and Motivation}

However, the scope of existing approaches is still limited despite the advancement of generalized measurement theory. Measurement is typically only utilized as a last stage in the decision-making process, when the POVM assists in determining threshold values rather than serving as a mechanism that modifies the image. As a result, the POVM's whole operator structure is not completely utilized, and the framework still focuses mostly on segmentation rather than broader picture modifications. Moreover, Gaussian models are not incorporated into a formulation where measurement results immediately produce a continuous mapping of pixel intensities, even if they capture the statistical features of the intensity histogram. This makes the shift from discrete decisions to more flexible and seamless modifications difficult.

These limitations highlight a key gap in the existing literature: the absence of a data-adaptive, operator-theoretic framework in which quantum measurement acts as a transformation mechanism derived directly from the statistical structure of image intensities. This motivates the need for a formulation in which measurement is not merely used for decision-making, but serves as a fundamental mechanism for defining continuous, probabilistic transformations of image data.

\subsection{Novelty and Contributions}

The main contributions of this work are summarized as follows:

\begin{itemize}

    \item We formulate probabilistic intensity remapping as a quantum measurement-induced process, rather than a terminal thresholding step. Data-adaptive operators derived from Gaussian intensity models define an unsharp measurement in which each pixel contributes probabilistically to multiple intensity levels.

    \item We reconstruct intensities using the expectation value of measurement outcomes, yielding a continuous remapping function that replaces hard histogram partitioning and preserves structural information through smooth intensity transitions.

    \item The framework is fully data-adaptive, with measurement operators derived directly from the input image (Sec.~\ref{adaptive}), and incorporates a nonlinear sharpening mechanism (Sec.~\ref{Sharpness}) that controls localization. This enables a continuous transition between soft probabilistic remapping and hard assignment behavior, providing an explicit trade-off between smoothing and intensity discrimination.

    \item The proposed formulation is closely related to kernel-based regression methods such as the Nadaraya–Watson estimator \cite{nadaraya1964estimating, watson1964smooth}, where Gaussian functions act as weights and normalization arises naturally from the POVM completeness condition.

    \item From a quantum mechanical perspective, the remapping operation can be interpreted as the expectation value of an observable associated with an unsharp measurement (Sec.~\ref{operator}). This establishes a direct connection between probabilistic intensity modeling and the operator-theoretic framework of quantum measurement.

\end{itemize}

\subsection{Organization}

The remainder of the paper is organized as follows: Sec~\ref{methodology} describes the proposed methodology in detail, from the construction of adaptive Gaussian POVMs to the reconstruction framework. Sec~\ref{results} discusses the experimental results and compares the proposed approach with existing methods. Section~\ref{discussion} provides a theoretical analysis along with a discussion of adaptive behavior and sharpness properties. Finally, Section~\ref{conclusion} concludes the paper.
\section{Methodology}\label{methodology}

\subsection{Problem Formulation}

Let $I : \Omega \subset \mathbb{Z}^2 \rightarrow \mathcal{I}$ denote a grayscale image, where $\mathcal{I} = \{0,1,\dots,255\}$. The objective is to construct a transformation
\begin{equation}
\hat{I}(x,y) = T\big(I(x,y)\big).
\end{equation}
Conventional classical and existing quantum approaches typically realize $T$ via thresholding or histogram partitioning, leading to piecewise-constant mappings with limited ability to capture smooth intensity variations. In contrast, we formulate $T$ as a probabilistic transformation induced by measurement statistics. Specifically, we construct a set of operators $\{E_k\}_{k=1}^K$ and representative intensities $\{\mu_k\}_{k=1}^K$ such that
\begin{equation}
\hat{I}(x,y) = \sum_{k=1}^K \mu_k \, P_k(x,y),
\label{eqn:reconstruction}
\end{equation}
where $P_k(x,y)$ denotes the measurement probability associated with the input intensity. Thus we formulate the problem as a probabilistic intensity remapping framework, where each input intensity is mapped to an output value through measurement-induced probabilities. Unlike threshold-based methods that partition the intensity space, the proposed approach defines a continuous transformation governed by the statistics of generalized measurements.

\begin{figure*}
    \centering
    \includegraphics[width=\linewidth]{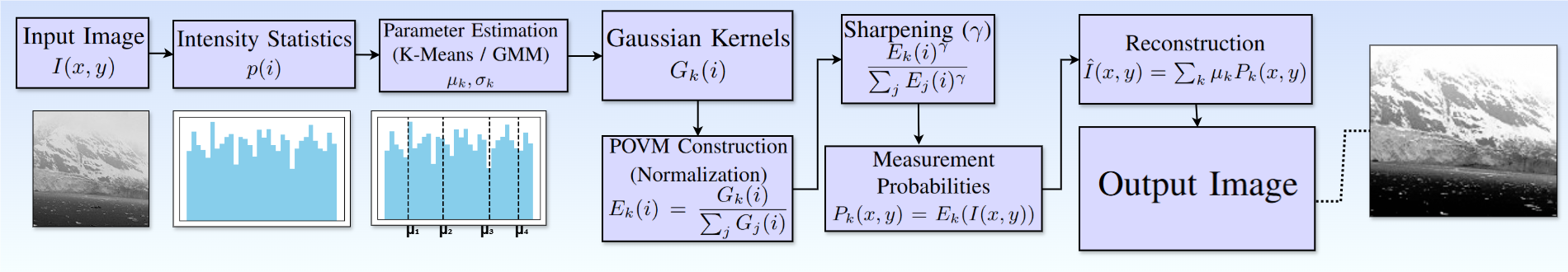}
    \caption{Proposed framework: intensity statistics are used to construct POVMs, followed by sharpening and probabilistic reconstruction.}
    \label{fig:pipeline}
\end{figure*}
\subsection{Image Representation in Hilbert Space}

To enable a measurement-theoretic formulation, grayscale intensities are embedded in a finite-dimensional Hilbert space. Let $\mathcal{H} = \mathbb{C}^{256}$ with orthonormal computational basis
\begin{equation}
\{\,|0\rangle, |1\rangle, \dots, |255\rangle\,\},
\end{equation}
where each basis vector $|i\rangle$ corresponds to an intensity level $i \in \{0,1,\dots,255\}$, establishing a one-to-one mapping between intensities and basis states. For an image $I$ defined over $(x,y)$, each pixel is represented as a pure-state projector
\begin{equation}
\rho_{x,y} = |I(x,y)\rangle \langle I(x,y)|,
\end{equation}
encoding the deterministic intensity in the computational basis. At a global level, the image is described by the diagonal density operator
\begin{equation}
\rho = \sum_{i=0}^{255} p(i)\, |i\rangle \langle i|,
\end{equation}
where $p(i)$ is the normalized intensity histogram. This provides a probabilistic representation of the image and enables the application of quantum measurement operators. Importantly, this embedding is not physical but operator-theoretic, allowing classical data to be processed within a generalized quantum measurement framework.

\subsection{Adaptive Gaussian Construction of POVM}

We construct a family of measurement operators over the intensity Hilbert space that define an unsharp measurement of the intensity observable. The construction is based on Gaussian models derived from the statistical distribution of image intensities, resulting in a data-adaptive set of operators.

\subsubsection{Gaussian Response Functions}

Let $\{\mu_k\}_{k=1}^K$ denote representative intensity values obtained from the image, for example via clustering or statistical estimation. For each $\mu_k$, we define a Gaussian response function over the intensity domain $\mathcal{I}$ as
\begin{equation}
G_k(i) = \exp\left(-\frac{(i - \mu_k)^2}{2\sigma_k^2}\right).
\label{eqn:GRFunction}
\end{equation}
where $\sigma_k$ controls the spread of the $k$-th component.  In the case of uniform spread, a common parameter $\delta$ may be used. These functions define smooth weighting profiles over the intensity domain, assigning higher weights to values close to $\mu_k$ while allowing contributions from neighboring intensities. This naturally implements a coarse-grained measurement consistent with unsharp measurement theory.

\subsubsection{Construction of Measurement Operators}

Using the Gaussian response functions, we define diagonal operators on $\mathcal{H}$:
\begin{equation}
\tilde{E}_k = \sum_{i=0}^{255} G_k(i)\, |i\rangle \langle i|.
\end{equation}
These operators are positive semidefinite but do not necessarily satisfy completeness.

\subsubsection{Normalization and POVM Structure}

To obtain valid measurement operators, we normalize the responses pointwise:
\begin{equation}
E_k(i) = \frac{G_k(i)}{\sum_{j=1}^K G_j(i)}, \quad \forall i \in \mathcal{I},
\end{equation}
and define
\begin{equation}
E_k = \sum_{i=0}^{255} E_k(i)\, |i\rangle \langle i|.
\end{equation}

The resulting operators $\{E_k\}_{k=1}^K$ satisfy positivity and completeness, and therefore constitute a valid POVM.

\subsubsection{Sharpening of Measurement Operators}

To control the degree of measurement sharpness, we introduce a nonlinear transformation parameterized by $\gamma > 0$:
\begin{equation}
E_k(i) \rightarrow \frac{E_k(i)^\gamma}{\sum_{j=1}^K E_j(i)^\gamma}.
\end{equation}
Larger values of $\gamma$ concentrate the distribution around dominant components, approaching projective measurements in the limit $\gamma \to \infty$, while smaller values correspond to smoother measurements.

\subsubsection{Measurement Interpretation}

For a pixel at $(x,y)$ with state $\rho_{x,y}$, the probability of outcome $k$ is
\begin{equation}
P_k(x,y) = \mathrm{Tr}(E_k \rho_{x,y}) = E_k(I(x,y)).
\end{equation}

Thus, the POVM defines an unsharp measurement of intensity, where Gaussian functions act as measurement kernels. Unlike fixed constructions, the operators are derived directly from the image statistics, resulting in a data-adaptive measurement process.

\begin{figure*}
    \centering
    \begin{subfigure}[t]{0.23\linewidth}
        \centering
        \includegraphics[width=\linewidth]{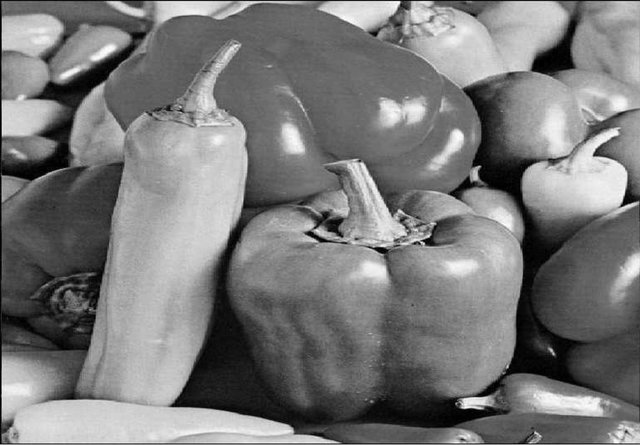}
        \caption{Peppers}
        \label{fig:peppers}
    \end{subfigure}
    \begin{subfigure}[t]{0.16\linewidth}
        \centering
        \includegraphics[width=\linewidth]{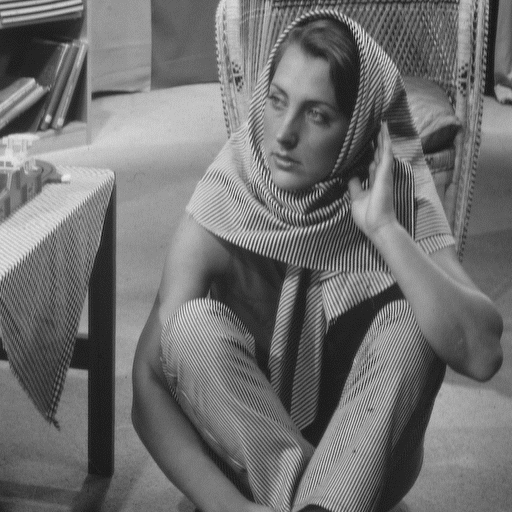}
        \caption{Barbara}
        \label{fig:barbara}
    \end{subfigure}
    \begin{subfigure}[t]{0.16\linewidth}
        \centering
        \includegraphics[width=\linewidth]{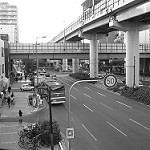}
        \caption{100}
        \label{fig:100}
    \end{subfigure}
    \begin{subfigure}[t]{0.16\linewidth}
        \centering
        \includegraphics[width=\linewidth]{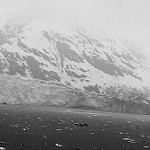}
        \caption{1001}
        \label{fig:1001}
    \end{subfigure}
    \caption{Original Images}
    \label{fig:original_img}
\end{figure*}

\subsection{Image Reconstruction}

Given the constructed POVM, the image transformation is defined through expectation values of measurement outcomes. For a pixel $(x,y)$ with state
\begin{equation}
\rho_{x,y} = \ket{I(x,y)}\bra{I(x,y)},
\end{equation}
the probability of outcome $k$ is given by Eq.~\eqref{eqn:reconstruction}, and the reconstructed value is
\begin{equation}
\hat{I}(x,y) = \sum_{k=1}^K \mu_k \, P_k(x,y),
\end{equation}
which forms a convex combination of representative intensities and thus preserves the valid intensity range.

\subsubsection{Expectation Value Interpretation}\label{operator}

Define the operator
\begin{equation}
A = \sum_{k=1}^K \mu_k E_k.
\end{equation}
Then,
\begin{equation}
\hat{I}(x,y) = \mathrm{Tr}(A \rho_{x,y}),
\end{equation}
showing that the reconstruction is the expectation value of an observable. This establishes a measurement-induced mapping of intensities, replacing discrete decisions with continuous transformations governed by measurement statistics, where the reconstructed intensity represents the average measurement outcome and captures uncertainty in the underlying distribution.

\begin{algorithm}[t]
\caption{Adaptive POVM-Based Measurement and Reconstruction}
\label{alg:povm}

\SetKwInOut{Input}{Input}
\SetKwInOut{Output}{Output}

\Input{Grayscale image $I$, cluster means $\{\mu_k\}_{k=1}^K$, spread parameters $\{\sigma_k\}_{k=1}^K$ (or $\delta$), optional weights $\{w_k\}_{k=1}^K$, sharpening parameter $\gamma$}
\Output{Reconstructed image $\hat{I}$}

\BlankLine
Define intensity domain $\mathcal{I} = \{0,1,\dots,255\}$\;

\BlankLine
\tcp{Construct Gaussian response functions}
\For{$k = 1$ \KwTo $K$}{
    \For{each $i \in \mathcal{I}$}{
        \If{$w_k, \sigma_k$ provided}{
            $G_k(i) \gets w_k \exp\left(-\frac{(i - \mu_k)^2}{2\sigma_k^2}\right)$\;
        }
        \Else{
            $G_k(i) \gets \exp\left(-\frac{(i - \mu_k)^2}{2\delta^2}\right)$\;
        }
    }
}

\BlankLine
Form response matrix $E \in \mathbb{R}^{K \times 256}$\;

\BlankLine
\tcp{Normalize (POVM condition)}
\For{each $i \in \mathcal{I}$}{
    \For{$k = 1$ \KwTo $K$}{
        $E_k(i) \gets \frac{G_k(i)}{\sum_{j=1}^K G_j(i)}$\;
    }
}

\BlankLine
\tcp{Apply sharpening}
\For{each $i \in \mathcal{I}$}{
    \For{$k = 1$ \KwTo $K$}{
        $E_k(i) \gets \frac{E_k(i)^\gamma}{\sum_{j=1}^K E_j(i)^\gamma}$\;
    }
}

\BlankLine
\tcp{Reconstruction}
\For{each pixel $(x,y)$}{
    \For{$k = 1$ \KwTo $K$}{
        $P_k(x,y) \gets E_k(I(x,y))$\;
    }
    $\hat{I}(x,y) \gets \sum_{k=1}^K \mu_k \, P_k(x,y)$\;
}

\Return $\hat{I}$\;
\end{algorithm}

Fig~\ref{fig:pipeline} represents the proposed probabilistic framework. The input image is represented through its intensity statistics, which are used to construct Gaussian kernels and corresponding POVM elements. A sharpening transformation controls measurement localization, and the final image is obtained via probabilistic reconstruction as an expectation value. And the overall procedure of the proposed framework is summarized in Algorithm~\ref{alg:povm}.




\section{Results}\label{results}

\subsection{Experimental Setup}

\subsubsection{Datasets}

The proposed framework is evaluated on a set of images, namely \textit{Peppers} \cite{lena_peppers_barbara}, \textit{Barbara} \cite{lena_peppers_barbara}, \textit{100} \cite{landscape_colorization_kaggle}, and \textit{1001} \cite{landscape_colorization_kaggle}, as shown in Fig.~\ref{fig:original_img}. To facilitate direct embedding into the Hilbert space $\mathcal{H} \subseteq \mathbb{C}^{256}$(Sec.~\ref{methodology}), all images are transformed to grayscale in the interval $[0,\ 255]$. The selected images offer a variety of intensity histogram profiles because they include urban settings, portraits and natural scenes.  This diversity ensures a comprehensive evaluation of the robustness of the proposed method.

\subsubsection{Estimation of Representative Intensity Values}

The construction of the measurement operators requires a set of representative intensities $\{\mu_k\}_{k=1}^K$ which capture the statistical structure of the image. These are obtained through data-driven estimation rather than predefined selection. In this work, we used two approaches: (i) K-Means clustering on pixel intensities, where cluster centers define $\{\mu_k\}$, and (ii) Gaussian Mixture Model (GMM) fitting to the intensity histogram, where component means define $\{\mu_k\}$ and variances provide the spread parameters $\{\sigma_k\}$. In the GMM-based approach, $\{\sigma_k\}$ are derived from component covariances, enabling adaptive behavior, while in the KMeans-based method, a uniform spread $\delta$ is used. These procedures enable data-driven estimation of the underlying intensity distribution; the subsequent operator construction and transformation remain entirely within the operator-theoretic framework as described in Sec.~\ref{methodology}. This estimation of $\{\mu_k\}$ serves as a data-adaptive mechanism for defining measurement operators rather than a learning-based transformation.

\subsubsection{Parameters and Hyperparameters}

The proposed framework is governed by key parameters which includes the number of components $K$, the variance (spread) $\{\sigma_k\}$ (or $\delta$), and the sharpening parameter $\gamma$. The parameter $K$ controls the resolution of the intensity representation, with larger values giving a better partition of the intensity space. On the other hand, the spread parameter determines the width of the Gaussian response functions (Eq.~\ref{eqn:GRFunction}). The sharpening parameter $\gamma$ controls measurement localization, where smaller $\gamma$ values corresponds to the unsharp region and larger values approach a projective regime. GMM parameters are estimated via expectation-maximization, while KMeans determines cluster centers through variance minimization. Together, these parameters enable controlled exploration of the trade-off between smoothing and localization.

\begin{figure*}
\centering


\begin{subfigure}[t]{0.19\textwidth}
    \centering
    \includegraphics[width=\linewidth]{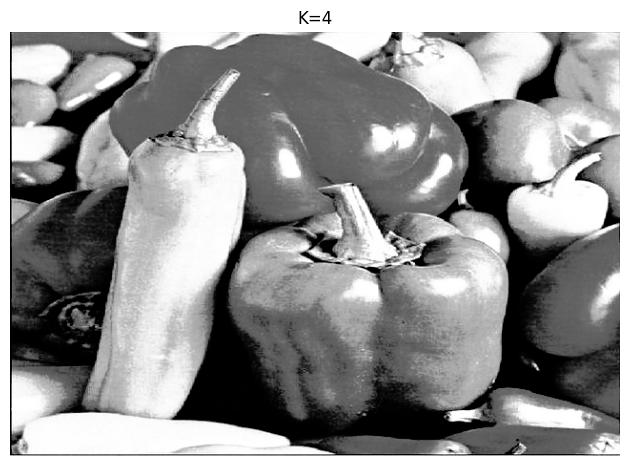}
    \caption{Proposed K-Means}
    \label{fig:peppers_kmeans}
\end{subfigure}\hfill
\begin{subfigure}[t]{0.19\textwidth}
    \centering
    \includegraphics[width=\linewidth]{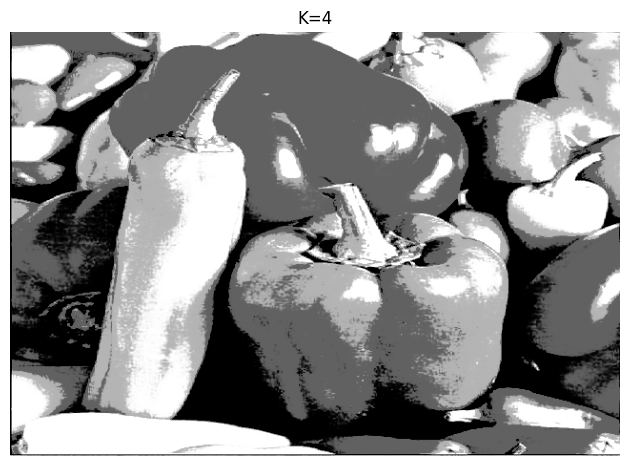}
    \caption{Proposed GMM}
    \label{fig:peppers_gmm}
\end{subfigure}\hfill
\begin{subfigure}[t]{0.19\textwidth}
    \centering
    \includegraphics[width=\linewidth]{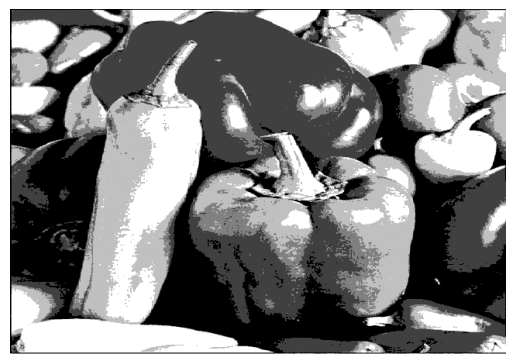}
    \caption{Unsharp Measurement}
    \label{fig:peppers_unsharp}
\end{subfigure}\hfill
\begin{subfigure}[t]{0.19\textwidth}
    \centering
    \includegraphics[width=\linewidth]{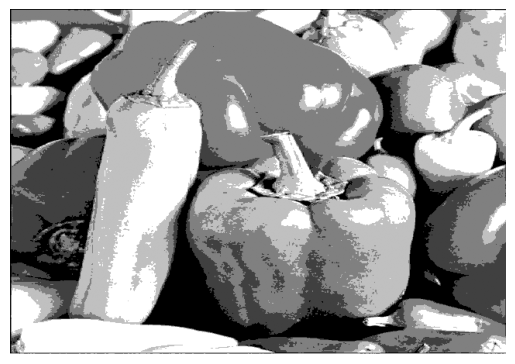}
    \caption{Multi-Otsu}
    \label{fig:peppers_otsu}
\end{subfigure}\hfill
\begin{subfigure}[t]{0.19\textwidth}
    \centering
    \includegraphics[width=\linewidth]{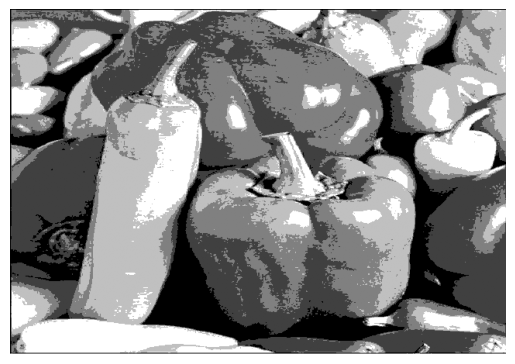}
    \caption{Fast Statistical Recursive}
    \label{fig:peppers_fast}
\end{subfigure}

\begin{subfigure}[t]{0.19\textwidth}
    \centering
    \includegraphics[width=\linewidth]{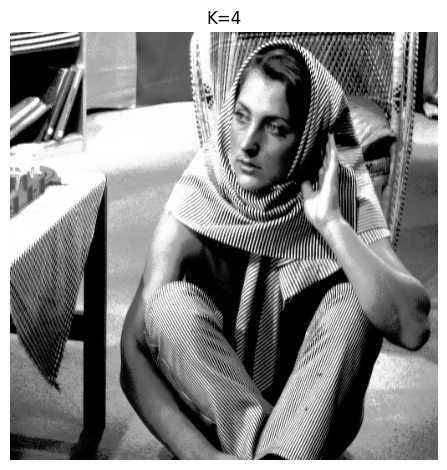}
    \caption{Proposed K-Means}
    \label{fig:barbara_kmeans}
\end{subfigure}\hfill
\begin{subfigure}[t]{0.19\textwidth}
    \centering
    \includegraphics[width=\linewidth]{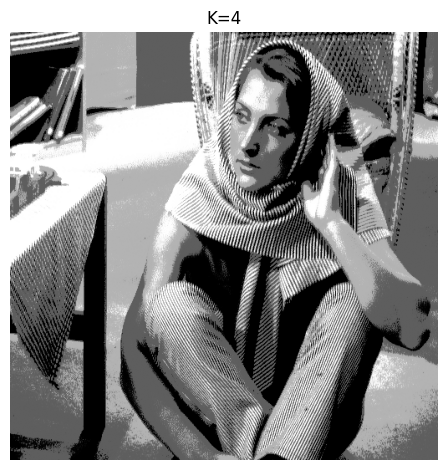}
    \caption{Proposed GMM}
    \label{fig:barbara_gmm}
\end{subfigure}\hfill
\begin{subfigure}[t]{0.19\textwidth}
    \centering
    \includegraphics[width=\linewidth]{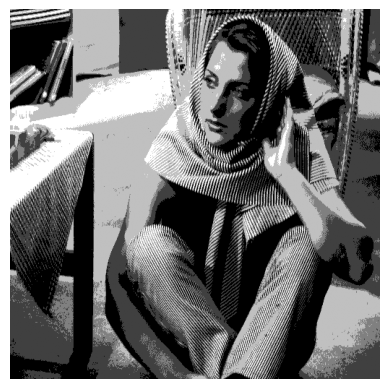}
    \caption{Unsharp Measurement}
    \label{fig:barbara_unsharp}
\end{subfigure}\hfill
\begin{subfigure}[t]{0.19\textwidth}
    \centering
    \includegraphics[width=\linewidth]{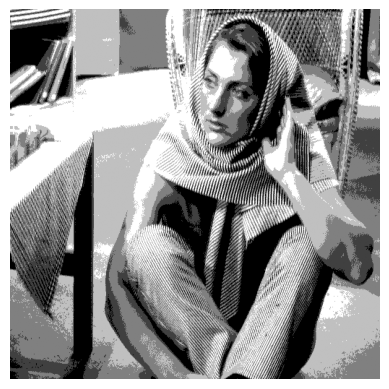}
    \caption{Multi-Otsu}
    \label{fig:barbara_otsu}
\end{subfigure}\hfill
\begin{subfigure}[t]{0.19\textwidth}
    \centering
    \includegraphics[width=\linewidth]{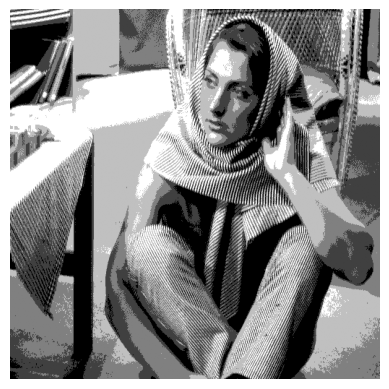}
    \caption{Fast Statistical Recursive}
    \label{fig:barbara_fast}
\end{subfigure}

\begin{subfigure}[t]{0.24\textwidth}
    \centering
    \includegraphics[width=\linewidth]{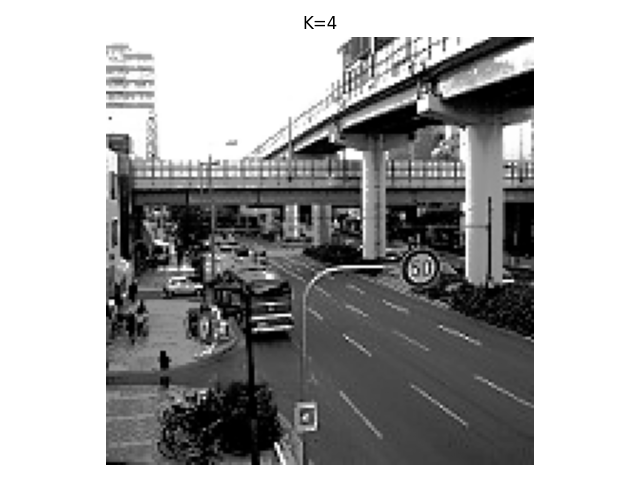}
    \caption{Proposed K-Means}
    \label{fig:img100_kmeans}
\end{subfigure}\hfill
\begin{subfigure}[t]{0.24\textwidth}
    \centering
    \includegraphics[width=\linewidth]{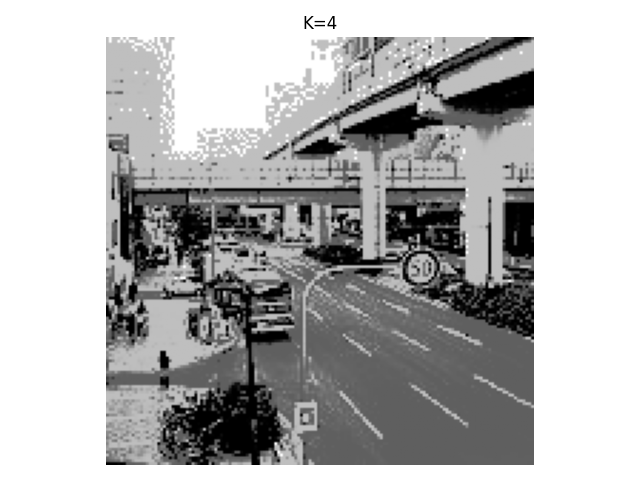}
    \caption{Proposed GMM}
    \label{fig:img100_gmm}
\end{subfigure}\hfill
\begin{subfigure}[t]{0.17\textwidth}
    \centering
    \includegraphics[width=\linewidth]{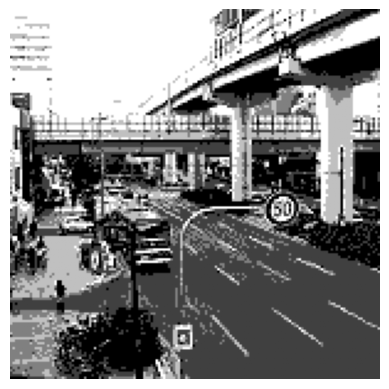}
    \caption{Unsharp Measurement}
    \label{fig:img100_unsharp}
\end{subfigure}\hfill
\begin{subfigure}[t]{0.17\textwidth}
    \centering
    \includegraphics[width=\linewidth]{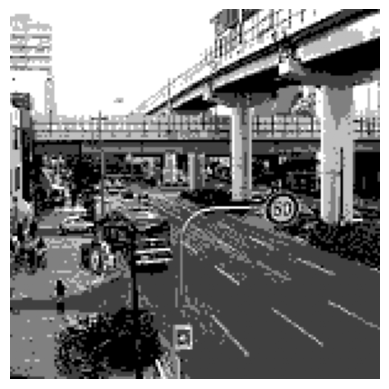}
    \caption{Multi-Otsu}
    \label{fig:img100_otsu}
\end{subfigure}\hfill
\begin{subfigure}[t]{0.17\textwidth}
    \centering
    \includegraphics[width=\linewidth]{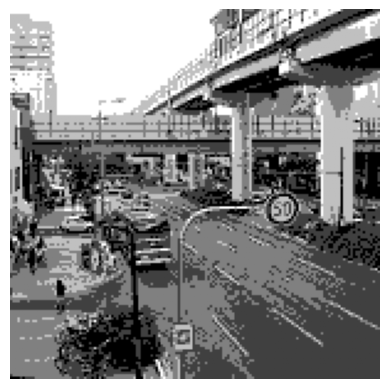}
    \caption{Fast Statistical Recursive}
    \label{fig:img100_fast}
\end{subfigure}

\begin{subfigure}[t]{0.24\textwidth}
    \centering
    \includegraphics[width=\linewidth]{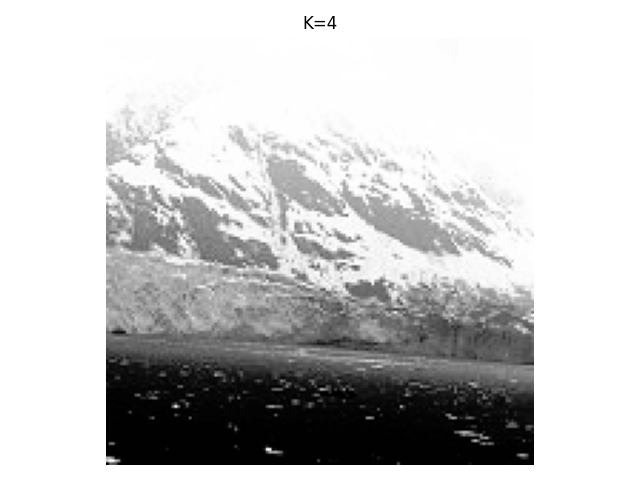}
    \caption{Proposed K-Means}
    \label{fig:img1001_kmeans}
\end{subfigure}\hfill
\begin{subfigure}[t]{0.24\textwidth}
    \centering
    \includegraphics[width=\linewidth]{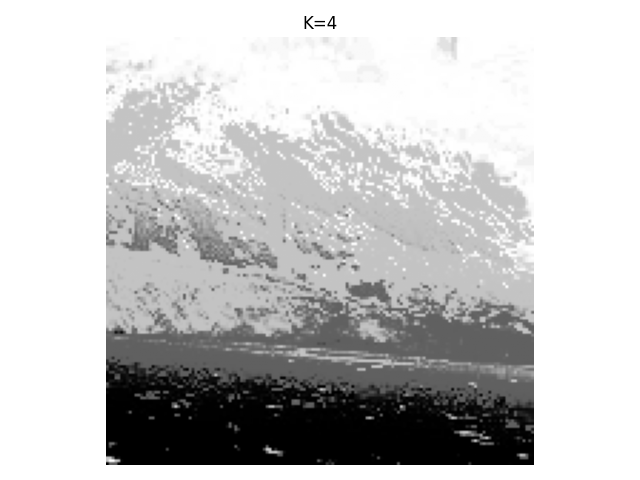}
    \caption{Proposed GMM}
    \label{fig:img1001_gmm}
\end{subfigure}\hfill
\begin{subfigure}[t]{0.17\textwidth}
    \centering
    \includegraphics[width=\linewidth]{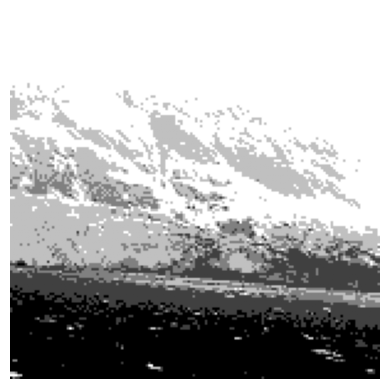}
    \caption{Unsharp Measurement}
    \label{fig:img1001_unsharp}
\end{subfigure}\hfill
\begin{subfigure}[t]{0.17\textwidth}
    \centering
    \includegraphics[width=\linewidth]{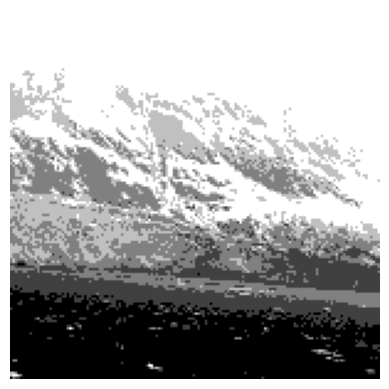}
    \caption{Multi-Otsu}
    \label{fig:img1001_otsu}
\end{subfigure}\hfill
\begin{subfigure}[t]{0.17\textwidth}
    \centering
    \includegraphics[width=\linewidth]{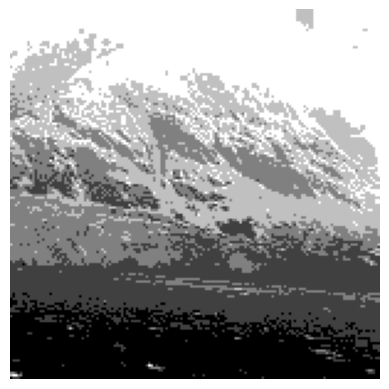}
    \caption{Fast Statistical Recursive}
    \label{fig:img1001_fast}
\end{subfigure}

\caption{Comparison of reconstructed images obtained using Proposed K-Means, Proposed GMM, Unsharp Measurement, Multi-Otsu, and Fast Statistical Recursive methods (from left to right). Rows correspond to different datasets: (top to bottom) Lena, Peppers, Barbara, Image 100, and Image 1001.}
\label{fig:all_comparisons}

\end{figure*}

\subsection{Visual Results Comparison}

In Fig.~\ref{fig:all_comparisons}, reconstructed pictures using GMM- and KMeans-based POVMs are compared with unsharp measurement, Multi-Otsu, and fast statistical recursive methods. For consistency, $k=4$ Gaussian centers are employed for all approaches. The suggested techniques maintain shape and shading for the Peppers image (Figs.~\ref{fig:peppers_kmeans}--\ref{fig:peppers_fast}), while unsharp measurement decreases contrast, Multi-Otsu creates piecewise-constant artifacts, and the fast statistical method distorts structure. The suggested approaches preserve high-frequency textures for the Barbara image (Figs.~\ref{fig:barbara_kmeans}--\ref{fig:barbara_fast}); KMeans maintains structured patterns like stripes, while GMM produces smoother outputs; unsharp is unable to capture texture, Multi-Otsu introduces quantization artifacts, and the fast statistical method loses structural detail. For the Image 100 urban scene (Figs.~\ref{fig:img100_kmeans}--\ref{fig:img100_fast}), the proposed methods preserve key structures, with KMeans enhancing edges and GMM providing smoother transitions, while unsharp and Multi-Otsu introduce segmentation artifacts and the fast statistical method reduces contrast and fine detail. For Image 1001 (Figs.~\ref{fig:img1001_kmeans}--\ref{fig:img1001_fast}), the proposed methods preserve gradients and homogeneous regions with clear intensity separation; KMeans yields sharper outputs and GMM produces more consistent smoothing, whereas unsharp leads to oversmoothing, Multi-Otsu causes excessive discretization, and the fast statistical method degrades both smooth and structured regions. Overall, the proposed methods maintain structural fidelity while avoiding artifacts introduced by thresholding-based approaches.

\begin{figure*}
    \centering
    
    \begin{subfigure}[t]{0.3\linewidth}
        \centering
        \includegraphics[width=\linewidth]{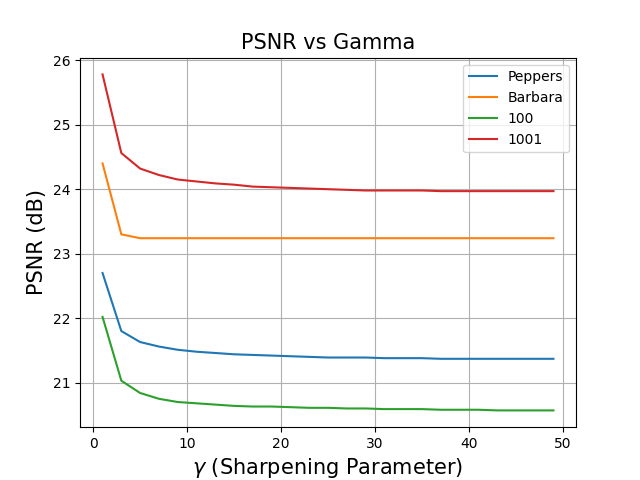}
        \caption{PSNR vs $\gamma$ (K-Means)}
    \end{subfigure}
    \hfill
    \begin{subfigure}[t]{0.3\linewidth}
        \centering
        \includegraphics[width=\linewidth]{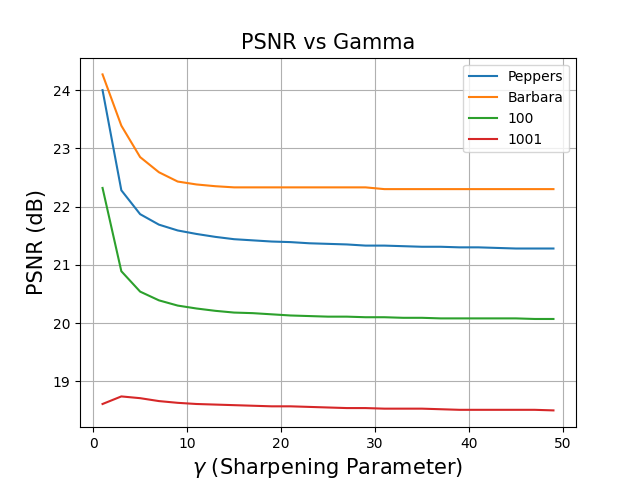}
        \caption{PSNR vs $\gamma$ (GMM)}
    \end{subfigure}
    \hfill
    \begin{subfigure}[t]{0.3\linewidth}
        \centering
        \includegraphics[width=\linewidth]{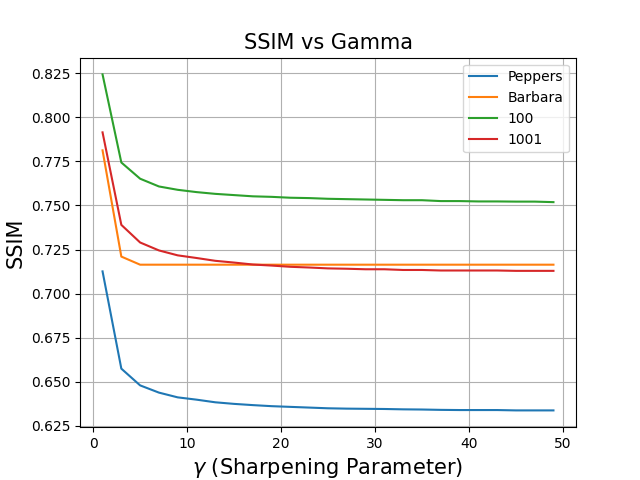}
        \caption{SSIM vs $\gamma$ (K-Means)}
    \end{subfigure}

    \begin{subfigure}[t]{0.3\linewidth}
        \centering
        \includegraphics[width=\linewidth]{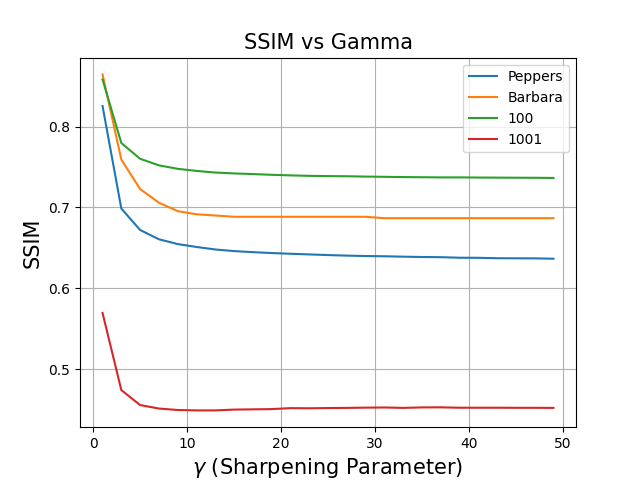}
        \caption{SSIM vs $\gamma$ (GMM)}
    \end{subfigure}
    \hfill
    \begin{subfigure}[t]{0.3\linewidth}
        \centering
        \includegraphics[width=\linewidth]{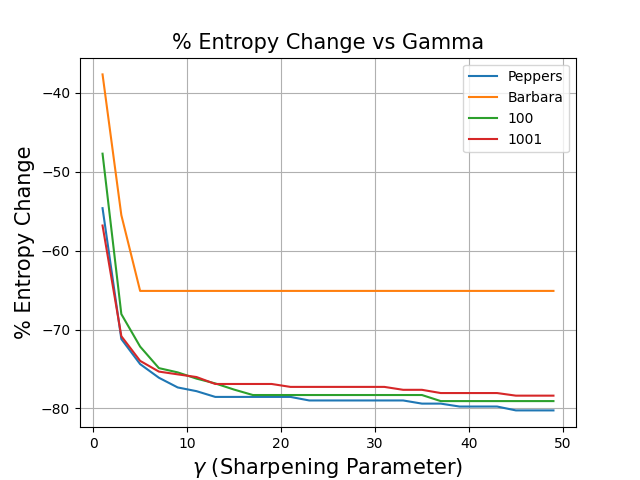}
        \caption{$\Delta$Entropy vs $\gamma$ (K-Means)}
    \end{subfigure}
    \hfill
    \begin{subfigure}[t]{0.3\linewidth}
        \centering
        \includegraphics[width=\linewidth]{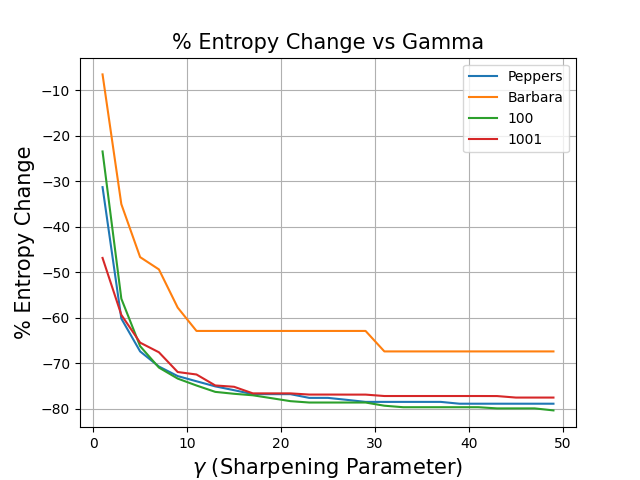}
        \caption{$\Delta$Entropy vs $\gamma$ (GMM)}
    \end{subfigure}

    \begin{subfigure}[t]{0.3\linewidth}
        \centering
        \includegraphics[width=\linewidth]{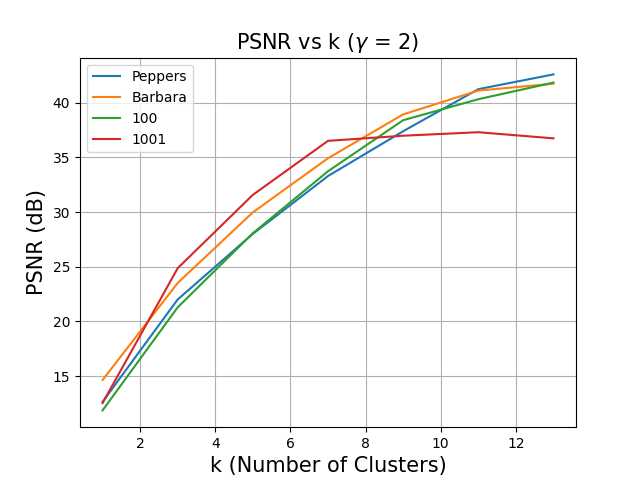}
        \caption{PSNR vs $k$ (K-Means)}
    \end{subfigure}
    \hfill
    \begin{subfigure}[t]{0.3\linewidth}
        \centering
        \includegraphics[width=\linewidth]{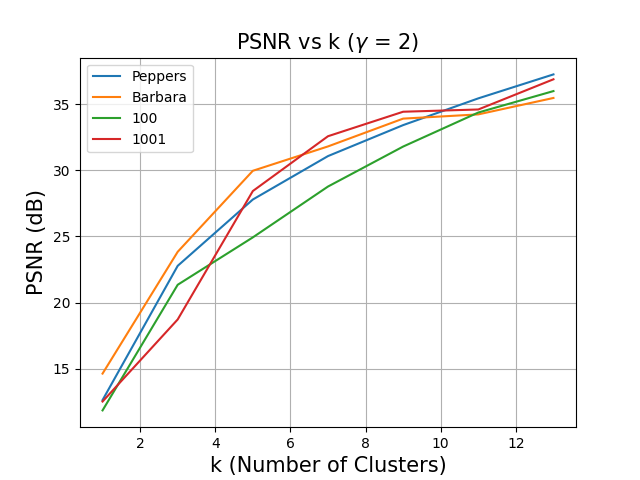}
        \caption{PSNR vs $k$ (GMM)}
    \end{subfigure}
    \hfill
    \begin{subfigure}[t]{0.3\linewidth}
        \centering
        \includegraphics[width=\linewidth]{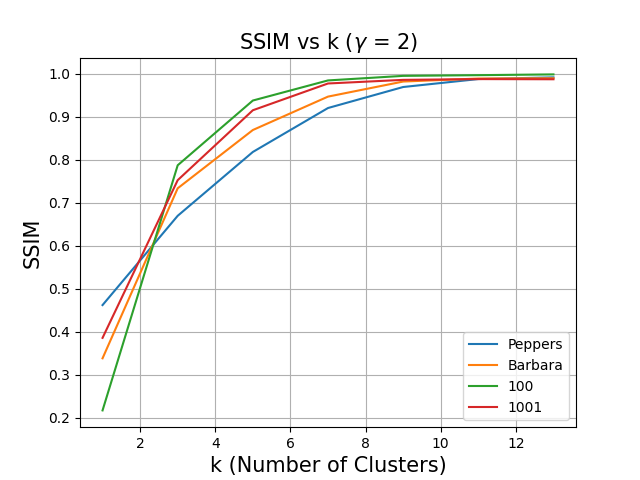}
        \caption{SSIM vs $k$ (K-Means)}
    \end{subfigure}

    \begin{subfigure}[t]{0.3\linewidth}
        \centering
        \includegraphics[width=\linewidth]{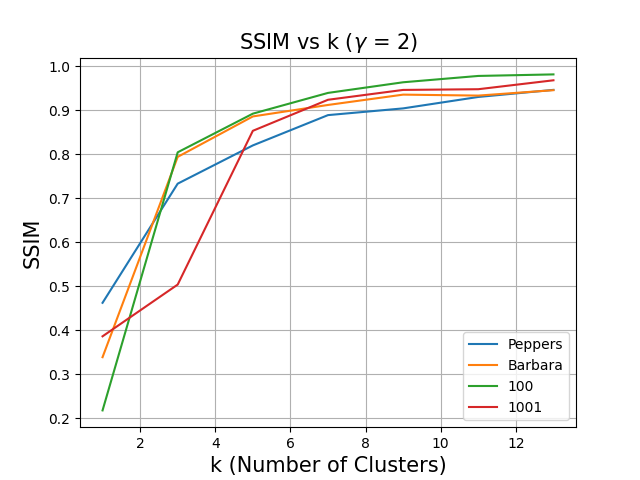}
        \caption{SSIM vs $k$ (GMM)}
    \end{subfigure}
    \hfill
    \begin{subfigure}[t]{0.3\linewidth}
        \centering
        \includegraphics[width=\linewidth]{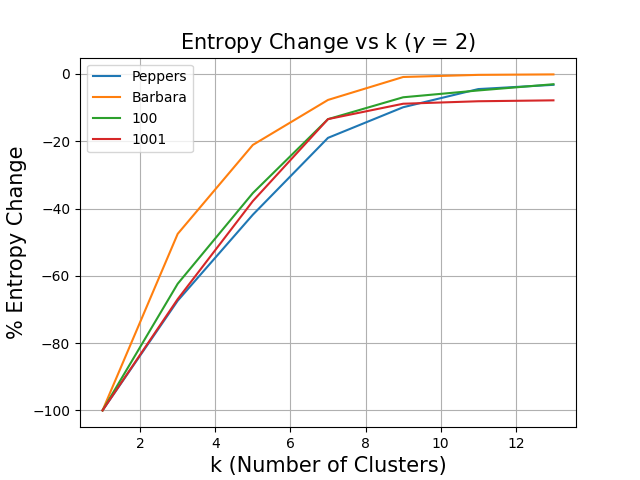}
        \caption{$\Delta$Entropy vs $k$ (K-Means)}
    \end{subfigure}
    \hfill
    \begin{subfigure}[t]{0.3\linewidth}
        \centering
        \includegraphics[width=\linewidth]{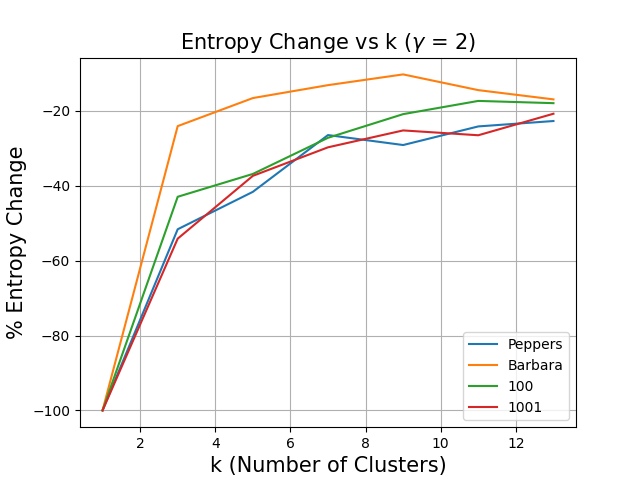}
        \caption{$\Delta$Entropy vs $k$ (GMM)}
    \end{subfigure}
    
    \caption{Comprehensive analysis of PSNR, SSIM, and $\% \Delta \mathrm{Entropy}$ with respect to $\gamma$ and $k$ for the proposed K-Means and GMM-based algorithms.}
    \label{fig:combined_analysis}
\end{figure*}

\subsection{Quantitative Results Comparison}

\begin{table*}[t]
\centering
\caption{Performance comparison across multiple datasets.}
\label{tab:all_results}

\begin{subtable}[t]{0.45\textwidth}
\centering
\caption{Performance Comparison for Peppers Image}
\label{tab:peppers}
\begin{tabular}{lcccc}
\toprule
Algorithm & PSNR & SSIM & $\Delta$Entropy (\%) & Time (s) \\
\midrule
Fast Statistical & 19.6173 & 0.5934 & -71.1138 & 0.0156 \\
Multi-Otsu       & 17.2776 & 0.6291 & -71.1035 & 5.2597 \\
Unsharp Measure  & 20.2134 & 0.5958 & -68.2602 & 0.3141 \\
\textbf{Proposed (GMM)}   & 27.7900 & 0.8203 & -41.6400 & 3.3264 \\
\textbf{Proposed (KMeans)}& \textbf{31.7500} & \textbf{0.9567} & -9.4800  & 0.8344 \\
\bottomrule
\end{tabular}
\end{subtable}
\hfill
\begin{subtable}[t]{0.45\textwidth}
\centering
\caption{Performance Comparison for Barbara Image}
\label{tab:barbara}
\begin{tabular}{lcccc}
\toprule
Algorithm & PSNR & SSIM & $\Delta$Entropy (\%) & Time (s) \\
\midrule
Fast Statistical & 16.2648 & 0.6044 & -39.2213 & 0.0466 \\
Multi-Otsu       & 14.9766 & 0.5933 & -49.1687 & 3.0221 \\
Unsharp Measure  & 19.7673 & 0.6438 & -47.9933 & 0.1149 \\
\textbf{Proposed (GMM)}   & 29.9600 & 0.8862 & -16.6000 & 3.3530 \\
\textbf{Proposed (KMeans)}& \textbf{34.6400} & \textbf{0.9658} & -1.8200  & 0.6704 \\
\bottomrule
\end{tabular}
\end{subtable}

\vspace{0.6em}

\begin{subtable}[t]{0.45\textwidth}
\centering
\caption{Performance Comparison for Image 100}
\label{tab:100}
\begin{tabular}{lcccc}
\toprule
Algorithm & PSNR & SSIM & $\Delta$Entropy (\%) & Time (s) \\
\midrule
Fast Statistical & 20.5148 & 0.7930 & -70.2130 & 0.0087 \\
Multi-Otsu       & 17.8657 & 0.7792 & -70.2385 & 5.0683 \\
Unsharp Measure  & 20.6549 & 0.8255 & -66.6448 & 0.1564 \\
\textbf{Proposed (GMM)}  & 24.9300 & 0.8925 & -36.8600 & 1.8837 \\
\textbf{Proposed (KMeans)}& \textbf{31.3500} & \textbf{0.9805} & -9.6200  & 0.1087 \\
\bottomrule
\end{tabular}
\end{subtable}
\hfill
\begin{subtable}[t]{0.45\textwidth}
\centering
\caption{Performance Comparison for Image 1001}
\label{tab:1001}
\begin{tabular}{lcccc}
\toprule
Algorithm & PSNR & SSIM & $\Delta$Entropy (\%) & Time (s) \\
\midrule
Fast Statistical & 16.4303 & 0.5619 & -67.4542 & 0.0108 \\
Multi-Otsu       & 16.0354 & 0.6292 & -69.7922 & 3.3858 \\
Unsharp Measure  & 18.6952 & 0.7145 & -66.6559 & 0.1785 \\
\textbf{Proposed (GMM)}   & 28.4300 & 0.8535 & -37.3700 & 2.5751 \\
\textbf{Proposed (KMeans)}& \textbf{31.9000} & \textbf{0.9754} & -11.4100 & 0.1170 \\
\bottomrule
\end{tabular}
\end{subtable}

\end{table*}
To quantitatively evaluate the proposed probabilistic intensity remapping framework, we employ Peak Signal-to-Noise Ratio (PSNR) \cite{korhonen2012peak}, Structural Similarity Index Measure (SSIM) \cite{ssim}, and the percentage change in Shannon entropy. Since the objective is not exact reconstruction but structure-preserving intensity remapping, these metrics are interpreted as measures of distortion, structural consistency, and information retention relative to the input image.

PSNR is defined as
\begin{equation}
\mathrm{PSNR} = 10 \log_{10} \left( \frac{MAX^2}{\mathrm{MSE}} \right),
\end{equation}
where $MAX$ is the maximum pixel value and MSE is the mean squared difference between the input and remapped images. In this context, PSNR quantifies the level of distortion introduced by the remapping.

SSIM is defined by combining luminance, contrast, and structural information:
\begin{equation}
\mathrm{SSIM}(x,y) = \frac{(2\mu_x \mu_y + C_1)(2\sigma_{xy} + C_2)}{(\mu_x^2 + \mu_y^2 + C_1)(\sigma_x^2 + \sigma_y^2 + C_2)},
\end{equation}
where $\mu_x, \mu_y$ are mean intensities, $\sigma_x^2, \sigma_y^2$ and $\sigma_{xy}$ are variances and covariance respectively. SSIM evaluates the preservation of structural information, particularly edges and textures, under the remapping.

Shannon entropy measures the information content of the intensity distribution:
\begin{equation}
H = - \sum_{i=1}^{N} p_i \log_2 p_i,
\end{equation}
where $p_i$ is the probability of occurrence of intensity level $i$. The percentage change in entropy is used to quantify information retention, with smaller reductions indicating better preservation of intrinsic image content. Together, these metrics provide a complementary evaluation in terms of distortion control, structural preservation, and information retention, which are central to probabilistic intensity remapping.

The proposed approaches consistently outperform baseline methods in terms of reconstruction fidelity, as demonstrated by the PSNR values in Tables~\ref{tab:peppers}-\ref{tab:1001}. While the GMM-based strategy also increases performance with PSNR often in the range of 24–31, the K-Means-based method consistently achieves the highest PSNR, achieving a peak value of 35.21 for Lena and remaining over 31 in most cases. Conversely, the fast statistical recursive method~\cite{ARORA2008119}, the unsharp measurement-based approach~\cite{barui2024novel}, and Multi-Otsu~\cite{6313341} produce much lower PSNR values, typically below 21. Similar results are seen for SSIM, where the proposed approaches provide significant structural preservation, with values above 0.95 for KMeans and above 0.85 for GMM (e.g., 0.9711 for Lena and 0.9805 for Image 100), as opposed to baseline methods, which range between 0.56 and 0.82.

The percentage change in Shannon entropy further highlights the advantage of the proposed framework. Conventional methods such as Multi-Otsu~\cite{6313341} and the statistical recursive approach~\cite{ARORA2008119} result in substantial entropy reductions (often exceeding 60--70\%), whereas the proposed methods exhibit significantly lower entropy loss. In particular, the KMeans-based method maintains entropy reduction within approximately 2--12\%, indicating better preservation of intrinsic image information, while the unsharp measurement-based method~\cite{barui2024novel} shows noticeably higher loss.

From a computational perspective, the KMeans-based method remains efficient, with execution times generally below one second, while the GMM-based approach incurs higher cost due to expectation-maximization but remains competitive with Multi-Otsu~\cite{6313341}. Although the fast statistical recursive method~\cite{ARORA2008119} is computationally efficient, it does so at the expense of reconstruction quality, and the unsharp measurement-based approach~\cite{barui2024novel} fails to achieve comparable performance. 

\begin{table*}[h]
\centering
\caption{Quantitative comparison on 100 randomly selected BSD500 images (mean $\pm$ standard deviation). Proposed methods evaluated with $k=8$ and $\gamma=2$ as probabilistic intensity remapping models.}
\label{tab:bsd500}
\begin{tabular}{lccc}
\hline
\toprule
Algorithm & PSNR & SSIM & $\Delta$Entropy (\%)\\
\hline
Gaussian Filter \cite{deng1993adaptive}
& 28.6607 $\pm$ 3.6590 & 0.8594 $\pm$ 0.0677 & -1.1165 $\pm$ 1.3012 \\
Bilateral Filter \cite{tomasi1998bilateral} & 30.3301 $\pm$ 3.1104 &  0.8700 $\pm$ 0.0567 & -1.1259 $\pm$ 1.1271 \\
BM3D \cite{danielyan2011bm3d} & 29.3568 $\pm$ 3.2318 & 0.8168 $\pm$ 0.0941 &-1.4039 $\pm$ 1.3729 \\
CLAHE \cite{reza2004realization} & 20.2173 $\pm$ 2.1111 & 0.8373 $\pm$ 0.0438 & 6.4902 $\pm$ 4.5446 \\
FastNLM \cite{buades2011non} & 34.7468 $\pm$ 2.2081 & 0.8911 $\pm$ 0.0634 &-1.3443 $\pm$ 1.2013 \\
\textbf{Proposed (GMM)} & 33.4993 $\pm$ 1.9285 & 0.9483 $\pm$ 0.0208 & \textbf{-25.9031 $\pm$ 4.5221} \\
\textbf{Proposed (K-Means)} & \textbf{34.7968 $\pm$ 2.8695} & \textbf{0.9783 $\pm$ 0.0164} & -6.8283 $\pm$ 3.1276 \\
\hline
\end{tabular}
\end{table*}
The results in Table~\ref{tab:bsd500} demonstrate that the proposed POVM-based methods consistently achieve superior performance on the BSD500 dataset \cite{MartinFTM01}. In particular, the KMeans-based model attains the highest PSNR and SSIM with low variance, indicating robust reconstruction fidelity and strong structural preservation across diverse images. Conventional methods such as Gaussian and bilateral filtering exhibit only marginal entropy changes, but at the cost of reduced structural quality, while methods such as BM3D and FastNLM show limited improvement in perceptual similarity. In contrast, the proposed framework enables a controlled trade-off between structural preservation and redundancy reduction. The KMeans-based model maintains high SSIM (0.9783) with moderate entropy reduction, preserving fine image details. Notably, the GMM-based model exhibits a significantly larger entropy reduction ($-25.90\%$), indicating effective suppression of redundant intensity variations rather than arbitrary information loss. This behavior is advantageous in applications requiring compact representation, such as image compression, while still maintaining high structural consistency (SSIM $=0.9483$). Overall, these results highlight the flexibility of the proposed probabilistic reconstruction framework in balancing fidelity and compactness through data-adaptive measurement design.
\section{Discussion}\label{discussion}

\subsection{Adaptive behavior}\label{adaptive}
We assume that the representative values $\{\mu_k\}$ provide a sufficiently dense coverage of the intensity space, in the sense that for each intensity level $i$, there exists a $\mu_k$ such that $|i - \mu_k| \leq \epsilon_K$, where $\epsilon_K \to 0$ as $K \to \infty$.

\begin{theorem}[Consistency of Adaptive POVM Reconstruction]
Let $I(x,y) \in \{0,1,\dots,255\}$ be a grayscale image and let $\{\mu_k\}_{k=1}^K$ be a set of representative intensities obtained from a statistical model (e.g., a Gaussian mixture model) such that for each intensity level $i$, there exists $\mu_k$ satisfying
\begin{equation}
    |\mu_k - i| \leq \epsilon_K,
\end{equation}
where $\epsilon_K \to 0$ as $K \to \infty$. Then, for fixed $\gamma$, the reconstruction satisfies
\begin{equation}    
    |\hat{I}(x,y) - I(x,y)| \to 0
    \quad \text{as } K \to \infty,
\end{equation}
for all $(x,y)$.
\end{theorem}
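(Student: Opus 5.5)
The plan is to rewrite the error as a weighted average of distances and then show that the weights concentrate on centers near $i=I(x,y)$. Fix a pixel and write $i=I(x,y)$ and $P_k=P_k(x,y)$. By the completeness condition of the (sharpened) POVM, $\sum_k P_k=1$, so the convex-combination form of the reconstruction gives
\begin{equation*}
\hat I(x,y)-i=\sum_{k=1}^K(\mu_k-i)\,P_k,
\qquad
|\hat I(x,y)-i|\le\sum_{k=1}^K|\mu_k-i|\,P_k .
\end{equation*}
The task is then to show that the weights $P_k$ put asymptotically all their mass on those $k$ with $\mu_k$ close to $i$.

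\textbf{Step 1: sharpening is a rescaled Gaussian.} Substituting $E_k(i)=G_k(i)/\sum_j G_j(i)$ into the sharpening map, the denominators cancel, so
\begin{equation*}
P_k=\frac{G_k(i)^\gamma}{\sum_j G_j(i)^\gamma}.
\end{equation*}
With a uniform spread $\delta$ (the K-Means case), this is a normalized Gaussian kernel of width $\delta/\sqrt{\gamma}$. With component spreads $\sigma_k$ one can still compare any $G_k$ to $G_{k^*}$ using $\sigma_{\min}$ and $\sigma_{\max}$. For fixed $\gamma$, the sharpening therefore only rescales the width and changes nothing structurally.

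\textbf{Step 2: split near and far components.} Let $k^*$ be an index with $|\mu_{k^*}-i|\le\epsilon_K$, which exists by the hypothesis. Choose a radius $r_K>\epsilon_K$, to be tuned below. Components with $|\mu_k-i|\le r_K$ contribute at most $r_K$ to the error bound. For the remaining components, use $P_k\le G_k(i)^\gamma/G_{k^*}(i)^\gamma$ and the fact that $|\mu_k-i|\le 255$. This gives
\begin{equation*}
|\hat I(x,y)-i|\le r_K+255\,K\exp\!\Big(-\frac{\gamma\,(r_K^2-\epsilon_K^2)}{2\sigma_K^2}\Big),
\end{equation*}
where $\sigma_K$ is the common spread, or $\sigma_{\max}$ in the heterogeneous case, with the obvious modification to the exponent.

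\textbf{Step 3: the main obstacle.} The hypothesis as stated does not make the exponential term small. If the spread is held fixed while the centers become dense, $\hat I$ is a Nadaraya–Watson smoother with nonvanishing bias. For example, at $i=0$ every $\mu_k\ge0$, so $\hat I>0$ stays bounded away from $0$. I will therefore make explicit the assumption that the spreads shrink with $K$; this holds implicitly for GMM fits, whose component variances shrink as components multiply, and it can be imposed on $\delta$ in the K-Means case. Concretely I will assume $\epsilon_K/\sigma_K\to0$ and $\sigma_K^2\log K\to0$. Then choosing $r_K=\max\{2\epsilon_K,\ \sigma_K\sqrt{(4/\gamma)\log K}\}$ gives $r_K\to0$, and the exponential term is at most $255\,K\cdot K^{-3/2}\to0$.

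\textbf{Step 4: uniformity.} Only the $256$ values $i\in\mathcal I$ occur, and the bound depends on $(x,y)$ only through $i$. The convergence is therefore uniform over all pixels, which yields the claim for all $(x,y)$.
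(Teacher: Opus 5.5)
Your skeleton matches the paper's. Both write $\hat I(x,y)-i=\sum_k(\mu_k-i)P_k$, bound it by $\sum_k|\mu_k-i|P_k$, and appeal to the weights concentrating near $\mu_{k^*}$. The difference is that the paper never proves the concentration. It asserts that ``the Gaussian construction ensures'' it and then writes $|\hat I(x,y)-i|\le\epsilon_K$. That inequality is false for every positive spread: take $i=0$ and centers at the odd multiples of $\epsilon_K$. Every $\mu_k\ge\epsilon_K$ and every weight is positive, so $\hat I(0)>\epsilon_K$.

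Your Step 3 is right that the stated hypotheses cannot yield concentration, so the theorem as written is false, not merely under-argued. Make the counterexample concrete, because $\hat I(0)>0$ alone is not a uniform lower bound. Take $\mu_k=255(k-1)/(K-1)$, so $\epsilon_K=255/(2(K-1))$, and a fixed $\delta$. Riemann sums give
\[
\hat I(0)\;\longrightarrow\;\frac{\int_0^{255}\mu\,e^{-\gamma\mu^2/(2\delta^2)}\,d\mu}{\int_0^{255}e^{-\gamma\mu^2/(2\delta^2)}\,d\mu}\approx\sqrt{2/\pi}\,\frac{\delta}{\sqrt{\gamma}}>0 .
\]
So an extra hypothesis on the spreads is necessary. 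Under it, your near/far split with the explicit bound $r_K+255K^{-1/2}$ is a genuine proof of the corrected statement, which the paper's argument does not supply.

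Some refinements follow.

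\textbf{Unused hypothesis.} The condition $\epsilon_K/\sigma_K\to0$ is never used in your common-spread estimate. The exponent bound needs only $r_K\ge2\epsilon_K$ and $r_K\ge\sigma_K\sqrt{(4/\gamma)\log K}$, so $\sigma_K^2\log K\to0$ suffices. A spread small relative to $\epsilon_K$ only pushes you toward nearest-center assignment, which is harmless.

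\textbf{Heterogeneous spreads.} A ratio condition is genuinely needed here. The exponent becomes $\frac{\gamma}{2}\bigl(r_K^2/\sigma_{\max}^2-\epsilon_K^2/\sigma_{\min}^2\bigr)$. You then need $\sigma_{\max}^2\log K\to0$ together with $\epsilon_K\sigma_{\max}/\sigma_{\min}\to0$. State this explicitly instead of calling it an obvious modification.

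\textbf{Sharpness of the $\log K$ condition.} Your condition is essentially sharp, because nothing forbids repeated centers. To the $256$ integer centers, add $K-256$ extra copies of $i+1$ for some $i<255$. Take $\sigma_K^2=c/\log K$ with $c>\gamma/2$. Then $\hat I(i)\to i+1$, even though $\epsilon_K=0$ and $\sigma_K\to0$.

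\textbf{GMM assumptions.} The claim that GMM variances ``implicitly'' shrink as $K$ grows is not a theorem, so keep it as an explicit assumption. Also, Algorithm~1 uses mixture weights $w_k$ in the GMM case, and sharpening turns them into $w_k^\gamma$. Your tail bound therefore acquires a factor $(w_k/w_{k^*})^\gamma$. You need something like $w_{k^*}\ge K^{-c}$, with the constant in $r_K$ adjusted accordingly.
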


This condition ensures that the discrete set $\{\mu_k\}$ provides an increasingly refined approximation of the intensity domain. The adaptive nature of the proposed measurement framework is governed by the number of Gaussian components $K$. As indicated by the theorem, increasing $K$ improves reconstruction fidelity by refining the representation of the intensity space. For small $K$, the induced POVM yields a coarse approximation, leading to stronger averaging over neighboring intensities and smoother outputs with reduced structural detail. As $K$ increases, the Gaussian components provide a finer coverage of the intensity domain, enabling measurement probabilities to better capture local variations and thereby enhance contrast and structural fidelity, consistent with Fig.~\ref{fig:combined_analysis}. This behavior follows from the reconstruction being a convex combination of representative intensities weighted by measurement probabilities, where larger $K$ increases expressive power. The effect of $K$ should also be considered alongside the sharpening parameter $\gamma$, which controls measurement localization.

\subsection{Sharpness Theorem}\label{Sharpness}

\begin{theorem}[Sharpness Theorem]
Let $\{E_k\}_{k=1}^K$ be a POVM constructed from Gaussian response functions given by Eq.~\ref{eqn:GRFunction}, and let the sharpened coefficients be defined as
\begin{equation}
    \tilde{E}_k(i) = \frac{E_k(i)^\gamma}{\sum_{j=1}^K E_j(i)^\gamma},  \quad \gamma > 0.
\end{equation}
Then, for each fixed $i$, as $\gamma \to \infty$, the coefficients $\tilde{E}_k(i)$ converge to
\begin{equation}
    \lim_{\gamma \to \infty} \tilde{E}_k(i) =
    \begin{cases}
        1, & \text{if } k = \displaystyle\arg\max_{j} E_j(i), \\
        0, & \text{otherwise}.
    \end{cases}
\end{equation}
Physically, the sharpened POVM converges pointwise to a projective measurement onto the dominant component.
\label{theorem:sharpness_theorem}
\end{theorem}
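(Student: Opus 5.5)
Fix an intensity level $i$ and write $a_k = E_k(i)$ for $k=1,\dots,K$. Each $a_k$ is strictly positive, because every Gaussian response $G_k(i)$ in Eq.~\eqref{eqn:GRFunction} is strictly positive, and so is their normalized quotient. Let $M = \max_j a_j > 0$. The plan is to divide numerator and denominator of $\tilde{E}_k(i)$ by $M^\gamma$:
\begin{equation*}
\tilde{E}_k(i) = \frac{(a_k/M)^\gamma}{\sum_{j=1}^K (a_j/M)^\gamma}.
\end{equation*}
The ratios $r_j = a_j/M$ lie in $(0,1]$, and $r_j = 1$ exactly when $j$ attains the maximum. The whole argument then reduces to the elementary limit $r^\gamma \to 0$ as $\gamma\to\infty$ for $0<r<1$, while $1^\gamma = 1$.

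Assume first that the argmax is unique, say at $k^*$. The denominator equals
\[
1 + \sum_{j\neq k^*} r_j^\gamma ,
\]
which tends to $1$, because it is a finite sum of at most $K-1$ terms, each going to $0$. For $k = k^*$ the numerator is $1$, so $\tilde{E}_{k^*}(i)\to 1$. For $k\neq k^*$ the numerator is $r_k^\gamma \to 0$ over a denominator tending to $1$, so $\tilde{E}_k(i)\to 0$. This is exactly the stated limit. Positivity and completeness $\sum_k \tilde{E}_k(i) = 1$ hold for every $\gamma$ and pass to the limit. The limiting diagonal operators therefore have entries in $\{0,1\}$ summing to $1$ at every $i$, so they are mutually orthogonal projectors. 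This justifies the ``projective measurement'' interpretation.

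The main obstacle is the case of ties, where several indices attain the maximum. The statement as written implicitly assumes a unique $\arg\max_j E_j(i)$. If the maximizing set $S(i)$ has $m>1$ elements, the same computation gives $\tilde{E}_k(i) \to 1/m$ for $k\in S(i)$ and $0$ otherwise, so the limit is not projective at that $i$. I would therefore state the uniqueness assumption explicitly, or restrict the theorem to intensity levels $i$ with a unique dominant component.

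To argue that this restriction is mild, I would note the following. Since normalization preserves order, $\arg\max_j E_j(i) = \arg\max_j G_j(i)$. Ties occur only where two of the Gaussian response functions coincide and are maximal. For the uniform-spread case, this happens only at midpoints $(\mu_j+\mu_k)/2$ of distinct centres, which are finitely many points. These are generically not integer levels, so pointwise convergence to a projective measurement holds on all of $\mathcal{I}$ for generic $\{\mu_k\}$.
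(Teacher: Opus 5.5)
Your proof is correct and follows essentially the same route as the paper's: both divide numerator and denominator by the $\gamma$-th power of the maximal coefficient and use $r^\gamma \to 0$ for $0 \le r < 1$. Your explicit treatment of ties improves on the paper. You show the limit is $1/m$ on a maximizing set of size $m$. The paper's proof silently assumes a unique maximizer when it asserts $E_k(i)/E_{k^*}(i) < 1$ for all $k \neq k^*$, and only remarks in the main text that the limit is then supported on the maximizing set.
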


The sharpening transformation provides a continuous interpolation between unsharp and projective measurements. As established in Theorem~\ref{theorem:sharpness_theorem}, the parameter $\gamma$ controls the concentration of the POVM elements in the intensity basis. For $\gamma = 1$, the operators $\{E_k\}$ retain their Gaussian form, corresponding to an unsharp measurement where each intensity contributes probabilistically to multiple outcomes, resulting in smooth distributions and structure-preserving reconstructions. As $\gamma$ increases, the normalized elements
\begin{equation}
    E_k(i) \;\rightarrow\; \frac{E_k(i)^\gamma}{\sum_j E_j(i)^\gamma}
\end{equation}
become more concentrated around dominant components, reducing overlap and inducing localization. In the limit $\gamma \to \infty$, the measurement approaches a projective-valued measure (PVM), consistent with the trends observed in Fig.~\ref{fig:combined_analysis}. These results support Theorem~\ref{theorem:sharpness_theorem} and show that $\gamma$ provides a principled control over the trade-off between smoothing and localization. In cases where multiple indices attain the maximum value, the limiting distribution is supported on the set of maximizing indices. Unlike threshold-based methods, $\gamma$ provides a continuous interpolation between probabilistic smoothing and hard localization, avoiding abrupt intensity partitioning.

\subsection{Physical Realization on Quantum Hardware}

\begin{figure}
    \centering
    \begin{quantikz}
        \lstick{$\rho_x$} & \qw & \gate[wires=2]{U} & \qw & \qw \\
        \lstick{$|0\rangle$} & \qw &  & \meter{} & \rstick{$p_k(x)$} \cw \\
        & \cw
        &\rstick{$p_1(x)$} \\
        & \cw
        & \rstick{$p_2(x)$} \\
        & \cw
        & \vdots \\
        & \cw
        & \rstick{$p_K(x)$}
    \end{quantikz}
    \caption{Naimark dilation of the adaptive Gaussian POVM. A joint unitary $U$ acts on the system state $|I(x)\rangle$ and an ancilla initialized in $|0\rangle$, producing the state $\sum_k (M_k |I(x)\rangle)\otimes |k\rangle$. Measurement of the ancilla yields outcome $k$ with probability $p_k(x)=\mathrm{Tr}(E_k \rho_x)$.}
    \label{fig:qc}
\end{figure}
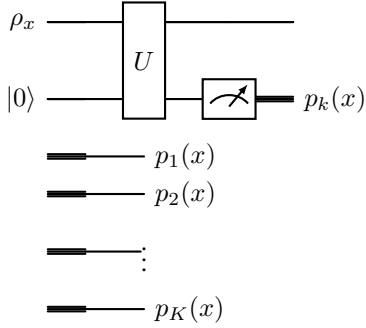

Although POVMs are not unitary operations, they can be realized through unitary evolution on an extended Hilbert space followed by projective measurement, as guaranteed by Naimark's dilation theorem \cite{peres1990neumark, holevo2011probabilistic,beneduci2017joint, pellonpaa2023naimark}. Given the POVM elements $\{E_k\}_{k=1}^K$, we define the corresponding Kraus operators
\begin{equation}
M_k = \sqrt{E_k} = \sum_{i=0}^{255} \sqrt{E_k(i)} \, |i\rangle \langle i|,
\end{equation}
where $E_k(i)$ are the diagonal elements of the POVM in the intensity basis. These operators satisfy the completeness relation
\begin{equation}
\sum_{k=1}^K M_k^\dagger M_k = \sum_{k=1}^K E_k = I.
\end{equation}

We define an isometry $V : \mathcal{H} \rightarrow \mathcal{H} \otimes \mathcal{H}_a$ by
\begin{equation}
V |\psi\rangle = \sum_{k=1}^K M_k |\psi\rangle \otimes |k\rangle,
\end{equation}
where $\{|k\rangle\}$ forms an orthonormal basis of the ancilla space $\mathcal{H}_a$. The completeness condition ensures that $V^\dagger V = I$, and hence $V$ can be extended to a unitary operator $U$ acting on the joint system:
\begin{equation}
U \left(|\psi\rangle \otimes |0\rangle\right) = V |\psi\rangle.
\end{equation}

For the computational basis states $|i\rangle$, this unitary implements the mapping
\begin{equation}
|i\rangle \otimes |0\rangle \;\longmapsto\; \sum_{k=1}^K \sqrt{E_k(i)} \, |i\rangle \otimes |k\rangle,
\end{equation}
which correlates the system with measurement outcomes stored in the ancilla. A projective measurement on the ancilla then yields outcome $k$ with probability
\begin{equation}
p_k = \mathrm{Tr}(E_k \rho),
\end{equation}
thus reproducing the desired POVM statistics. The circuit shown in Fig.~\ref{fig:qc} provides a schematic representation of this dilation. Since the operators $M_k$ are diagonal, the unitary $U$ reduces to controlled amplitude redistribution conditioned on the computational basis states. The implementation is compatible with near-term quantum devices since this structure enables effective deconstruction into elementary gates. It is important to emphasize that the measurement statistics, not the unitary evolution itself, are the source of the transformation in the current framework. The POVM is only embedded into a physically realizable process by the unitary $U$.

In the present framework, the POVM elements are diagonal in the computational (intensity) basis, enabling a structured realization. The corresponding dilation can be implemented via controlled operations conditioned on basis states, avoiding synthesis of a dense unitary and reducing circuit complexity with favorable scaling in $K$ and $\log d$. However, full-scale implementation for high-resolution images remains resource-intensive. Therefore, the framework is best interpreted as quantum-inspired for current hardware, or realized in hybrid schemes where probability evaluation is performed classically. Developing efficient implementations of such structured POVMs on near-term devices remains an open direction.

\section{Conclusion}\label{conclusion}

In this work, we proposed a probabilistic intensity remapping framework for grayscale images based on adaptive positive operator-valued measures (POVMs). By constructing data-driven measurement operators from Gaussian models of the image histogram, the method defines a continuous, structure-preserving mapping that replaces the rigid, piecewise-constant behavior of conventional thresholding approaches. The remapped intensities are obtained as expectation values, providing a principled probabilistic formulation of intensity transformation. A nonlinear sharpening parameter enables a continuous transition from soft probabilistic remapping to hard assignment, allowing explicit control over the trade-off between smoothing and structural localization. When compared to traditional thresholding-based techniques, experimental results show steady gains in PSNR, SSIM, and entropy, demonstrating the efficacy of the suggested strategy. All things considered, the framework admits a natural interpretation within quantum measurement theory and establishes probabilistic intensity remapping as a versatile and data-adaptive substitute for discrete histogram partitioning.

The suggested framework makes a number of recommendations for future quantum information processing research. The expectation-value formulation is in line with quantum machine learning paradigms based on observables and measurement outcomes, and the data-adaptive construction of POVMs offers a natural connection between measurement design and learning-based techniques. A technique to interpolate between probabilistic and deterministic behavior in tasks like classification and clustering is provided by the sharpening parameter $\gamma$, which permits a controlled shift between unsharp and projective regimes. Correlations and more expressive representations beyond intensity could be included in extensions to non-diagonal operators and higher-dimensional embeddings. These directions suggest that quantum measurement theory and statistical learning can be unified through measurement-induced transformations.

\ifCLASSOPTIONcaptionsoff
\newpage
\fi

\bibliographystyle{IEEEtran}

\bibliography{IEEE}

\section{Supplementary Information}
\begin{figure*}
    \centering
    \begin{subfigure}[t]{0.21\linewidth}
        \centering
        \includegraphics[width=\linewidth]{Images/100/100_km_rec_img_k4.png}
        \caption{k = 2}
    \end{subfigure}\hfill
    \begin{subfigure}[t]{0.21\linewidth}
        \centering
        \includegraphics[width=\linewidth]{Images/100/100_km_rec_img_k4.png}
        \caption{k = 4}
    \end{subfigure}\hfill
    \begin{subfigure}[t]{0.21\linewidth}
        \centering
        \includegraphics[width=\linewidth]{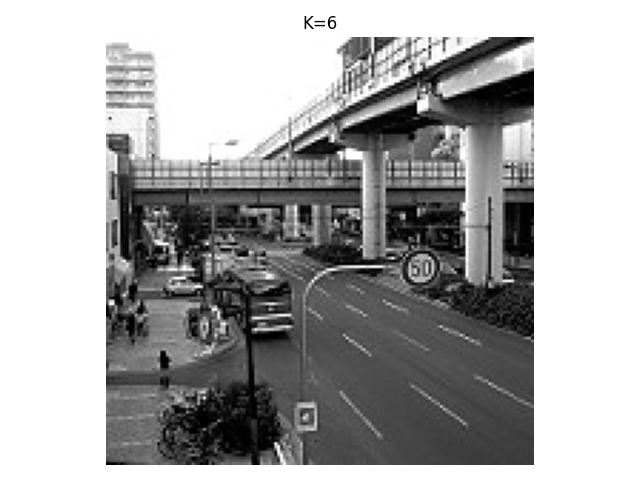}
        \caption{k = 6}
    \end{subfigure}\hfill
    \vspace{0.3em}
    \begin{subfigure}[t]{0.21\linewidth}
        \centering
        \includegraphics[width=\linewidth]{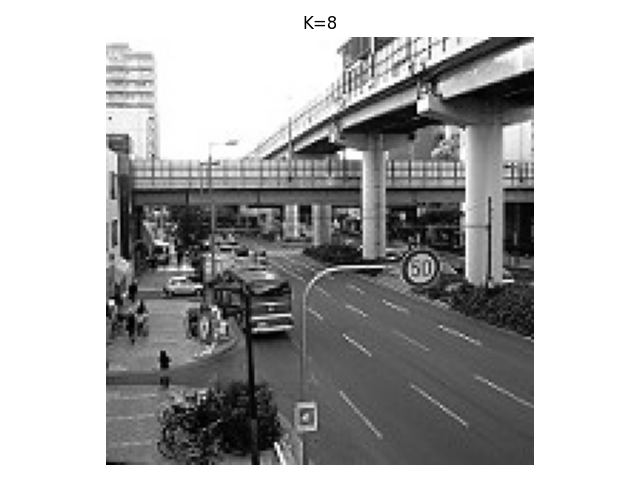}
        \caption{k= 8}
    \end{subfigure}
    \begin{subfigure}[t]{0.14\linewidth}
        \centering
        \includegraphics[width=\linewidth]{Images/100/100.jpg}
        \caption{Original}
    \end{subfigure}\hfill
    \caption{Robustness of the proposed model with K-means clustering. The first four images show reconstructed outputs for varying $k$, and the last image is the original.}
    \label{fig:km_k_2_8_100}
\end{figure*}

\begin{figure*}
    \centering
    \begin{subfigure}[t]{0.21\linewidth}
        \centering
        \includegraphics[width=\linewidth]{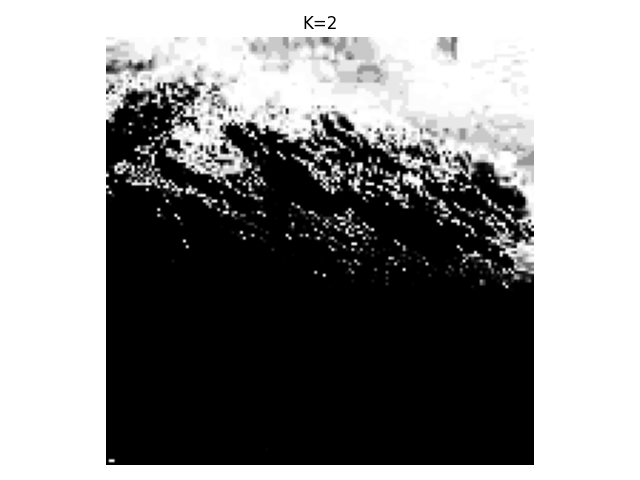}
        \caption{k = 2}
    \end{subfigure}\hfill
    \begin{subfigure}[t]{0.21\linewidth}
        \centering
        \includegraphics[width=\linewidth]{Images/1001/1001_gmm_rec_img_k4.png}
        \caption{k = 4}
    \end{subfigure}\hfill
    \begin{subfigure}[t]{0.21\linewidth}
        \centering
        \includegraphics[width=\linewidth]{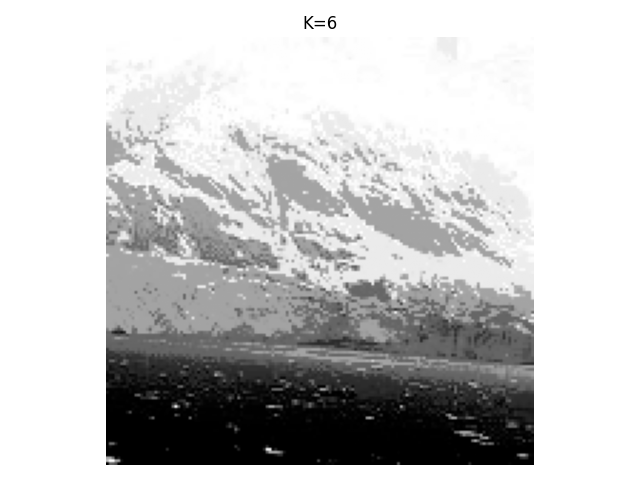}
        \caption{k = 6}
    \end{subfigure}\hfill
    \begin{subfigure}[t]{0.21\linewidth}
        \centering
        \includegraphics[width=\linewidth]{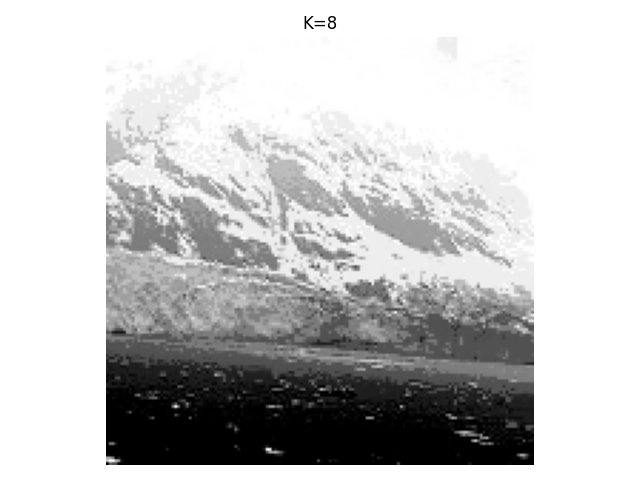}
        \caption{k= 8}
    \end{subfigure}
    \begin{subfigure}[t]{0.14\linewidth}
        \centering
        \includegraphics[width=\linewidth]{Images/1001/1001.jpg}
        \caption{Original}
    \end{subfigure}\hfill
    \caption{Robustness of the proposed model with Gaussian Mixture model clustering. The first four images show reconstructed outputs for varying $k$, and the last image is the original.}
    \label{fig:gmm_k_2_8_1001}
\end{figure*}

\subsection{Proof and Additional Discussion of Theorem 1}
\begin{proof}
For a pixel $(x,y)$ with intensity $i = I(x,y)$, we have
\begin{equation}
\hat{I}(x,y) = \sum_{k=1}^K \mu_k E_k(i).
\end{equation}

Since $\sum_k E_k(i) = 1$ and $E_k(i) \ge 0$, the reconstruction is a convex combination of the values $\{\mu_k\}$. Therefore,
\begin{equation}
    |\hat{I}(x,y) - i| = \left| \sum_{k=1}^K (\mu_k - i)E_k(i)\right|
    \leq \sum_{k=1}^K |\mu_k - i| E_k(i).
\end{equation}
By assumption, for each $i$ there exists $\mu_{k^*}$ such that $|\mu_{k^*} - i| \le \epsilon_K$, and as $K$ increases, the Gaussian construction ensures that the weights $E_k(i)$ concentrate around such indices. Hence,
\begin{equation}
|\hat{I}(x,y) - i| \leq \epsilon_K,
\end{equation}
with $\epsilon_K \to 0$ as $K \to \infty$. Therefore,
\begin{equation}
    \hat{I}(x,y) \to I(x,y),
\end{equation}
which completes the proof.
\end{proof}

By varying the number of Gaussian components $K$, the framework's adaptive behavior can be better understood. Figs.~\ref{fig:gmm_k_2_8_1001} and \ref{fig:km_k_2_8_100} demonstrate how raising $K$ results in increasingly better reconstructions. The POVM creates a coarse partition of the intensity space for small $K$, which limits the dynamic range and produces smoother outputs because of stronger averaging. Larger $K$ results in a finer coverage from the Gaussian components, which enhances contrast and structural detail by enabling the measurement probabilities to more precisely capture local intensity variations. Quantitative trends in PSNR, SSIM, and entropy support this observation. Both GMM- and KMeans-based constructions exhibit this improvement, although GMM generally yields smoother and more statistically consistent results due to its probabilistic modeling, whereas KMeans relies on hard clustering. These observations align with the convex combination structure of the reconstruction, where increasing $K$ enhances representational capacity.

\subsection{Proof and Additional Discussion of Theorem 2}
\begin{proof}
Fix $i \in \{0,1,\dots,255\}$ and define
\begin{equation}
M(i) = \max_{j} E_j(i).
\end{equation}
Let $k^*$ be an index such that $E_{k^*}(i) = M(i)$. Then, for any $k \neq k^*$, we have
\begin{equation}
    0 \leq \frac{E_k(i)}{E_{k^*}(i)} < 1.
\end{equation}

Now consider the sharpened coefficients:
\begin{equation}
\tilde{E}_k(i) = \frac{E_k(i)^\gamma}{\sum_{j=1}^K E_j(i)^\gamma}.
\end{equation}

For $k = k^*$,
\begin{align}
\tilde{E}_{k^*}(i) 
&= \frac{E_{k^*}(i)^\gamma}{E_{k^*}(i)^\gamma + \sum_{j \neq k^*} E_j(i)^\gamma} \\
&= \frac{1}{1 + \sum_{j \neq k^*} \left(\frac{E_j(i)}{E_{k^*}(i)}\right)^\gamma}.
\end{align}

Since $\frac{E_j(i)}{E_{k^*}(i)} < 1$ for all $j \neq k^*$, it follows that
\begin{equation}
\left(\frac{E_j(i)}{E_{k^*}(i)}\right)^\gamma \to 0 \quad \text{as } \gamma \to \infty.
\end{equation}
Hence,
\begin{equation}
    \lim_{\gamma \to \infty} \tilde{E}_{k^*}(i) = 1.
\end{equation}

For $k \neq k^*$,
\begin{align}
\tilde{E}_k(i) 
&= \frac{E_k(i)^\gamma}{E_{k^*}(i)^\gamma + \sum_{j \neq k^*} E_j(i)^\gamma} \\
&= \frac{\left(\frac{E_k(i)}{E_{k^*}(i)}\right)^\gamma}{1 + \sum_{j \neq k^*} \left(\frac{E_j(i)}{E_{k^*}(i)}\right)^\gamma}.
\end{align}
Again, since $\frac{E_k(i)}{E_{k^*}(i)} < 1$, we obtain
\begin{equation}
    \lim_{\gamma \to \infty} \tilde{E}_k(i) = 0.
\end{equation}

Therefore, the sharpened coefficients converge pointwise to a one-hot distribution concentrated on the maximizing index $k^*$, completing the proof.
\end{proof}

The effect of the sharpening parameter $\gamma$ can be understood both visually and quantitatively. Larger $\gamma$ causes the measurement to become essentially projective, which reduces intensity variability and gives rise to piecewise-constant or edge-dominated representations. Increasing sharpness suppresses fine-grained structure by reducing probability overlap, which is consistent with quantum measurement theory. The reconstruction is equivalent to the expectation value of an observable evaluated on a perturbed state when noise is present because changes in input intensities spread across the measurement probabilities. Even though the framework isn't specifically made for denoising, $\gamma$ offers a logical way to manage the trade-off between localization and smoothing in noisy environments.

\end{document}